\documentclass[conference]{IEEEtran}
\IEEEoverridecommandlockouts

\usepackage{cite}
\usepackage{amsmath,amssymb,amsfonts}
\usepackage{amsthm}
\usepackage{algorithmic}
\usepackage{graphicx}
\usepackage{float}
\usepackage{booktabs}
\usepackage{placeins}
\usepackage{textcomp}
\usepackage{xcolor}
\usepackage{url}
\usepackage{hyperref}
\hypersetup{hidelinks}

\graphicspath{{figures/}}

\newtheorem{theorem}{Theorem}
\newtheorem{corollary}{Corollary}

\begin{document}

\title{Distance-Preserving Digests:\\ A Primitive for BFT Consensus
\thanks{Code and reproduction artifacts: \url{https://github.com/RyanMercier/Proxima}}
}

\author{\IEEEauthorblockN{Ryan Mercier}
\IEEEauthorblockA{\textit{School of Engineering} \\
\textit{University of Connecticut}\\
Storrs, CT, USA \\
ryan.mercier@uconn.edu}
}

\maketitle

\begin{abstract}
Every BFT consensus protocol uses collision-resistant hashes to compare validator state. Collision resistance destroys distance: two validators agreeing on 19 of 20 transactions produce unrelated hashes, indistinguishable from validators sharing nothing. This forces three design constraints across the BFT literature: validators must synchronize state before voting, agreement quality cannot be measured until votes are counted, and hierarchical committees must be large enough for independent BFT, limiting tree depth. This paper introduces distance-preserving transaction digests, a primitive that replaces collision-resistant hashes with commutative vector sums in 8-dimensional space. The primitive has three properties hashes lack: distance is proportional to disagreement, weighted means are exact, and set differences are identifiable via bloom filter diff. We demonstrate three applications: a two-phase BFT protocol (Proxima) that achieves single-round finality when validators agree; tree-structured consensus with groups of 10 validators (vs 128 in Ethereum), enabled because distance filtering replaces per-group BFT; and cross-shard consistency verification at 128 bytes per shard pair, replacing the per-transaction coordination of two-phase commit. Safety is proved: fewer than $N/3$ Byzantine validators cannot cause conflicting finalization, independent of Phase 1 clustering or tree topology. At $N{=}100{,}000$, Proxima Tree uses $2.2\times$ fewer messages than HotStuff (a structural property unaffected by parallelism). Single-core finality is $\sim\!0.9$s vs $\sim\!18$s for HotStuff; multi-core BLS narrows but does not eliminate this gap.
\end{abstract}

\begin{IEEEkeywords}
Byzantine fault tolerance, blockchain, consensus protocols, locality-sensitive hashing, sharding
\end{IEEEkeywords}

\section{Introduction}

BFT consensus has been built on collision-resistant hashing since PBFT~\cite{castro1999pbft}. Collision resistance is the right property for preventing forgery, but the wrong property for measuring agreement. When two validators compute SHA-256 of their transaction sets and the hashes differ, the protocol learns exactly one bit of information: not identical. It cannot tell whether they differ on one transaction or half the transactions. It cannot measure how close they are to agreement. It cannot summarize a group of validators as a single compact value.

These limitations are baked into every protocol that uses hashes for state comparison. HotStuff~\cite{yin2019hotstuff} runs three voting rounds even when every validator agrees perfectly, because it has no way to detect unanimous agreement until all three rounds complete. Ethereum 2.0~\cite{eth2spec} requires committees of 128 validators because each committee must independently reach 2/3 consensus, and small committees fail when random assignment concentrates Byzantine validators. Cross-shard transactions require multi-round coordination (two-phase commit or receipt chains) because hashes cannot verify partial overlap between shard states.

This paper proposes replacing collision-resistant hashes with distance-preserving digests for state comparison in BFT protocols. The primitive is simple: hash each transaction with SHA-512, split the 64-byte output into 8 segments, treat each as a coordinate, and sum across transactions. The result is a commutative vector in 8D space where Euclidean distance is proportional to transaction disagreement. The primitive is not novel in the data-structures sense (locality-sensitive hashing dates to Indyk and Motwani~\cite{indyk1998lsh}), but its application to BFT consensus is new. No work in the LSH literature applies it to Byzantine agreement, and no work in the consensus literature reaches for distance-preserving functions.

We demonstrate that this single primitive removes all three constraints. Agreement quality is measurable in one round (enabling fast-path finality). Hierarchical groups need not reach internal BFT (enabling groups of 10 and deeper trees). Cross-shard consistency is verifiable at constant cost per shard pair. We implement these as Proxima, a working BFT blockchain, and evaluate message complexity, bandwidth, and projected latency against HotStuff, PBFT, and Ethereum's committee structure.

\section{System Model}

We consider a system of $N$ validators, of which at most $f < N/3$ are Byzantine. Byzantine validators may behave arbitrarily. Honest validators follow the protocol.

\textbf{Network model.} We assume partial synchrony~\cite{dwork1988partialsync}: there exists an unknown Global Stabilization Time (GST) after which all messages between honest validators are delivered within a known bound $\Delta$. This is the standard model for PBFT~\cite{castro1999pbft}, HotStuff~\cite{yin2019hotstuff}, and Tendermint~\cite{buchman2016tendermint}.

\textbf{Cryptographic assumptions.} We assume a BLS signature scheme~\cite{boneh2004bls} where each validator holds a private key and the corresponding public key is known to all. BLS signatures support aggregation: any party can combine $N$ individual signatures into a single 96-byte aggregate signature. We assume the hardness of the CDH problem in the BLS12-381 pairing group~\cite{blst}.

\textbf{Communication model.} Validators communicate through an aggregator (flat mode) or through a tree of relay nodes (tree mode). The aggregator/relay role rotates each round. A Byzantine aggregator can suppress messages (affecting liveness) but cannot forge BLS signatures (safety is maintained).

\textbf{Adversary model.} The adversary is static: it chooses which validators to corrupt before the protocol begins. The adversary has full network visibility and can coordinate Byzantine strategies.

\textbf{Hash function model.} We model SHA-512 as producing outputs uniformly distributed over $\{0,1\}^{512}$. Each 64-bit segment, when reduced modulo 10000 and divided by 10000, produces a value uniformly distributed in $[0, 1)$.

\textbf{Partial observation parameter.} Honest validators may receive transactions with delay relative to the proposer. We parameterize this by $p_{\mathrm{miss}}$, the per-validator probability of having an incomplete view at proposal time. We use $p_{\mathrm{miss}} = 0.37$ throughout, drawn from internal measurements; this is conservative relative to typical mainnet conditions.

\section{The Primitive: Distance-Preserving Digests}

\subsection{Construction}

Each transaction $tx$ is hashed with SHA-512, producing 64 bytes. The output is split into 8 segments of 8 bytes each. Each segment is interpreted as an unsigned integer, reduced modulo 10000, and divided by 10000 to produce a coordinate in $[0, 1)$:
\begin{equation*}
  v(tx) = \bigl[\, \mathrm{seg}_0/10000,\ \ldots,\ \mathrm{seg}_7/10000 \,\bigr]
\end{equation*}
A validator's digest for a set of transactions $T$ is the commutative sum $D(T) = \sum_{tx \in T} v(tx)$.

\subsection{Properties}

\textbf{Property 1: Proportional distance.} If validators $A$ and $B$ have transaction sets differing by $k$ transactions, the Euclidean distance between $D(T_A)$ and $D(T_B)$ grows proportionally with $k$. Empirically, $k=1$ gives $\sim 1.6$, $k=2$ gives $\sim 3.0$, $k=10$ gives $\sim 14.5$.

\begin{figure*}[!htbp]
  \centering
  \includegraphics[width=\textwidth]{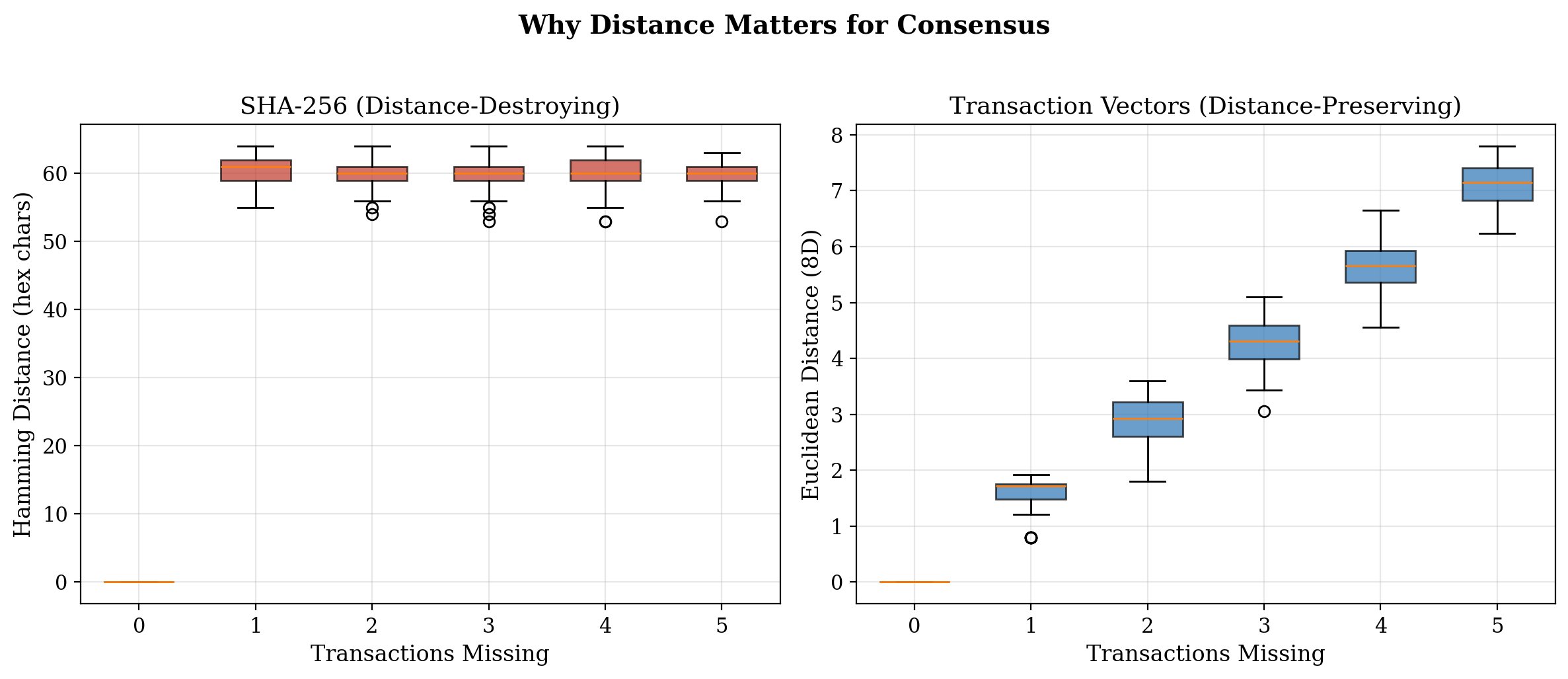}
  \caption{SHA-256 is distance-destroying; dropping even one transaction produces a completely unrelated hash. Transaction vectors are distance-preserving; distance grows proportionally with the number of missing transactions.}
  \label{fig:distance}
\end{figure*}

\textbf{Property 2: Exact summarization.} The weighted mean of $N$ digests is an exact representation of the group: $\mathrm{mean} = \sum_i D_i w_i / \sum_i w_i$. A group of 100 validators can be summarized as one 76-byte value (64-byte weighted mean + 4-byte count + 8-byte variance) with zero information loss. This does not hold for hashes.

\textbf{Property 3: Set difference identification.} When two digests differ, a bloom filter~\cite{bloom1970} (25 bytes at a $1\%$ per-lookup false positive rate for $20$ transactions) lets the aggregator identify which transactions are present in one set but not the other. Bloom filters admit false positives but never false negatives, so a transaction the recipient legitimately needs is never silently dropped; we analyze the false-positive direction in Section~\ref{sec:ieee-bloom-fp}. The aggregator diffs the bloom against the full transaction set and pushes missing transactions in a single message, eliminating the request-response round-trip of traditional state synchronization.

\subsection{Threshold Calibration}

The distance threshold separating honest-with-partial-observation from Byzantine-with-fabricated-state is calibrated by Monte Carlo simulation. We generate 2000 samples of honest validators missing 1--2 transactions, compute the 99th percentile of the resulting distance distribution, and multiply by a safety margin of $1.2$.

\subsection{Probabilistic Liveness Bound}
\label{sec:liveness-bound}

Safety depends only on BLS signatures (Section~\ref{sec:safety}) and is independent of the distance threshold. Liveness depends on the threshold including at least $2N/3$ honest validators. We derive a bound.

Under the SHA-512 uniformity assumption, each transaction vector $v(tx)$ has coordinates iid Uniform$[0,1)$ with mean $1/2$ and variance $1/12$. For $k$ missing transactions, the difference $S = v(tx_1) + \cdots + v(tx_k)$ has $\mathbb{E}[\|S\|^2] = 8(k/12 + k^2/4) = 2k/3 + 2k^2$, where the $k^2$ term is the mean-offset contribution. By Jensen, $\mathbb{E}[\|S\|] \le \sqrt{\mathbb{E}[\|S\|^2]}$, giving upper bounds $1.63, 3.06, 4.47$ at $k = 1, 2, 3$. Monte Carlo over $50{,}000$ trials gives empirical means $1.61, 3.03, 4.44$, consistent with the bound.

An honest validator with partial observation misses at most $k_{\max}=2$ transactions. From the Monte Carlo calibration, the conditional probability of exceeding threshold $\tau = 4.9$ is at most $0.005$. Accounting for $p_{\mathrm{miss}} = 0.37$, the unconditional per-validator exclusion probability is at most $p = 0.005 \cdot 0.37 = 0.00185$.

Let $X$ be the number of honest validators excluded. $X$ is a sum of $N$ independent indicators bounded in $[0,1]$ with mean $\mu \le pN$. Liveness fails only if $X \ge N/3$. Since $p \ll 1/3$, applying Hoeffding's inequality with $t = 1/3 - p$:
\begin{equation*}
  \Pr\bigl[X \ge N/3\bigr] \;\le\; \exp(-2 N t^2) \;\le\; \exp(-0.22 N).
\end{equation*}
This is below $10^{-9}$ at $N=100$ and below $10^{-95}$ at $N=1000$.

\section{Application 1: BFT Consensus Protocol}

\subsection{Flat Protocol}

\begin{figure*}[!htbp]
  \centering
  \includegraphics[width=0.92\textwidth]{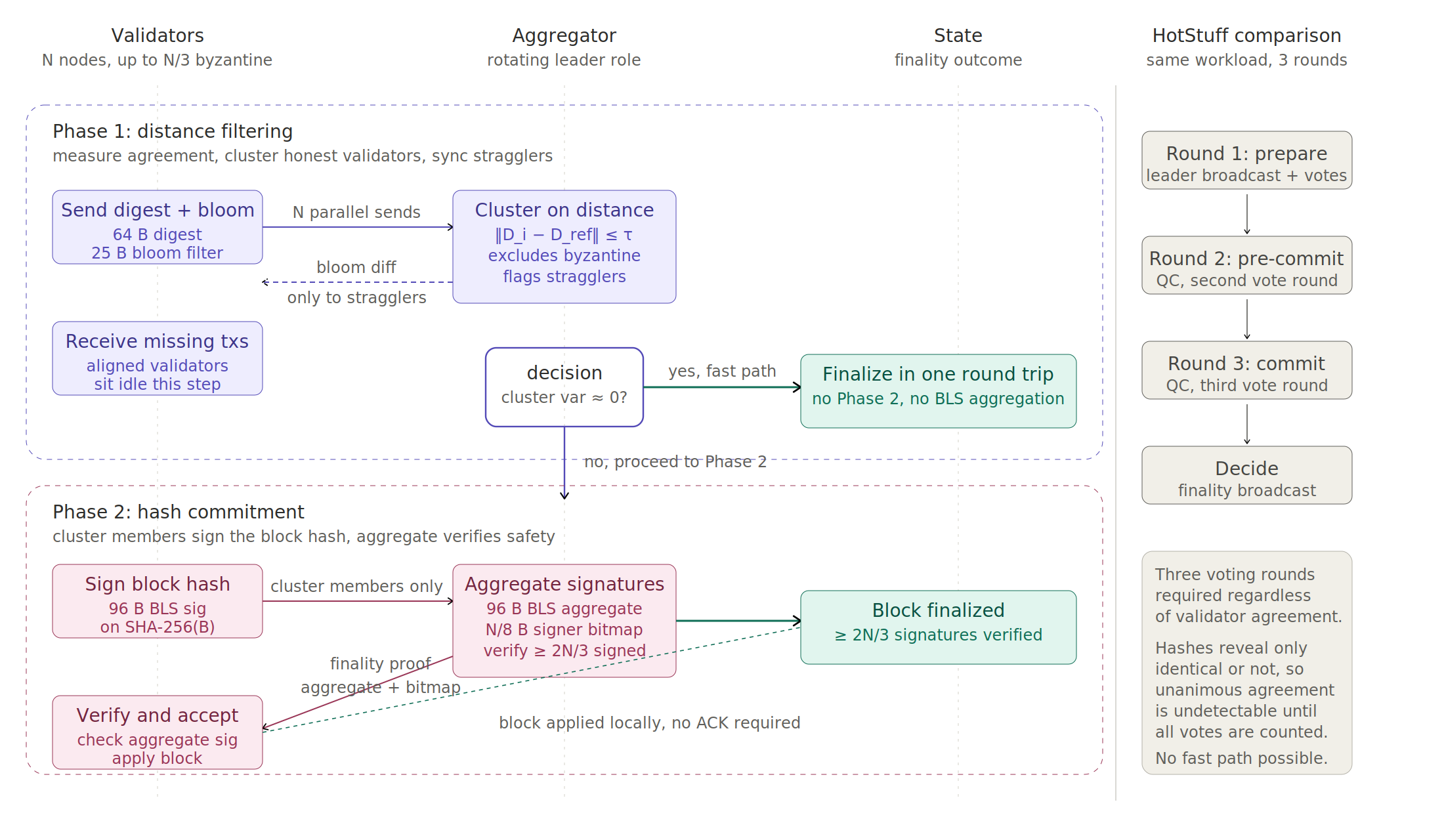}
  \caption{Two-phase Proxima protocol with optimistic fast path. Phase 1 exchanges 64-byte digests and 25-byte bloom filters; the aggregator clusters validators within threshold $\tau$ and pushes missing transactions to stragglers via bloom diff. If cluster variance is near zero the protocol finalizes in one round trip; otherwise Phase 2 collects 96-byte BLS signatures on the SHA-256 block hash, producing a 96-byte aggregate plus $N/8$-byte signer bitmap. HotStuff (right column) requires three voting rounds regardless of agreement because collision-resistant hashes cannot reveal cluster agreement before votes are counted.}
  \label{fig:protocol-flow}
\end{figure*}

Phase 1: each validator sends its digest (64 bytes) and bloom filter (25 bytes) to the aggregator ($N$ messages). The aggregator computes the reference digest, measures each validator's Euclidean distance from the reference, and clusters validators within the threshold. The aggregator pushes missing transactions to incomplete validators via bloom diff, then broadcasts cluster assignments.

If cluster variance is near zero, the aggregator issues a single-round finality certificate (fast path). No Phase 2 needed.

Phase 2: cluster members send BLS-signed hash commitments (one 96-byte message each). The aggregator produces an aggregate BLS signature (96 bytes) and a signer bitmap ($N/8$ bytes). The finality proof is multicast to cluster members. Finality requires $2N/3$ matching commitments.

\subsection{Tree Protocol}

Leaf groups do not need internal BFT. Distance filtering operates at the individual level. A group of 10 with 5 Byzantine excludes the 5 (they are far from the reference) and reports the weighted mean of the 5 honest validators. Groups never fail.

Phase 1 (bottom-up): validators are grouped into leaves of size $B$ (default 10). Each validator sends digest and bloom to its leaf leader ($N$ messages). The leaf leader filters by distance, pushes missing transactions via bloom diff, and sends a 76-byte summary upstream. Internal nodes aggregate child summaries (weighted mean of means is exact). The root checks the global mean against the reference.

Phase 2 (top-down then up): the root broadcasts a commit request down the tree. Each validator that passed the filter sends a BLS commit up. Each internal node aggregates child signatures (BLS is associative). The root produces the final aggregate and broadcasts the finality proof back down.

\subsection{Why Phase 2 Is Necessary}

The distance filter cannot distinguish an honest validator missing one transaction from a Byzantine validator who replaced one transaction with fraud. Both produce distance $\sim 1.6$. Phase 2 catches the fraud: the Byzantine validator must sign the correct block hash, which it does not have.

\subsection{Comparison with Ethereum Committees}

Ethereum 2.0~\cite{eth2spec} uses random committees of 128 with per-committee 2/3 BFT attestation. With 30\% Byzantine globally, committee failure is the probability of drawing more than $G/3$ Byzantine validators. For $G{=}10$: $\Pr[\text{Bin}(10, 0.3) \ge 4] \approx 35\%$. For $G{=}128$: $\Pr[\text{Bin}(128, 0.3) \ge 43] \approx 21\%$.

Proxima avoids this failure mode at the per-committee level. There is no per-group vote, so no group fails in the BFT sense. Simulation with 1000 validators, 300 Byzantine (30\%), groups of 10: with per-group BFT, 37 of 100 leaves fail and only 499 validators participate; with distance filtering, all 700 honest validators contribute. A leaf of entirely Byzantine validators (rare under random assignment) produces a fabricated summary, but it still has to pass the root's distance check against the reference.

\begin{table}[!htbp]
  \caption{Ethereum Committees vs Proxima}
  \label{tab:committees}
  \centering
  \small
  \begin{tabular}{@{}lll@{}}
    \toprule
     & \textbf{Ethereum} & \textbf{Proxima} \\
    \midrule
    Group size & 128 (BFT) & 10 (distance) \\
    Security & Per-group 2/3 vote & Individual distance \\
    Tree depth at $N{=}100$K & 2--3 levels & 5 levels \\
    Groups fail (30\% Byz) & $\sim\!21\%$ & $0\%$ \\
    Summary size/group & 128 votes fwd. & 76 bytes \\
    \bottomrule
  \end{tabular}
\end{table}

\section{Application 2: Cross-Shard Consistency}

\subsection{The Problem}

Cross-shard transactions are the bottleneck in sharded blockchains. Two-phase commit requires 4 cross-shard messages per transaction plus intra-shard BFT consensus on each coordination message. NEAR's Nightshade~\cite{nightshade} reduces this with receipts: the source shard generates a receipt the destination includes in its next block. Lower latency, but receipts scale with transaction volume.

\subsection{Digest-Based Verification}

Digests enable a different approach. Neighboring shards exchange digests of their overlap-zone transactions once per block. If the distance is zero, both shards processed the same transactions. If nonzero, the bloom diff identifies exactly which transactions diverged, and only those require resolution.

The cost is fixed per shard pair: 128 bytes. Conflict resolution scales with actual divergence, not total cross-shard volume.

\subsection{Analytical Comparison}

At 1000 cross-shard transactions per pair, 100 validators per shard, 95\% pre-deadline propagation (Table~\ref{tab:cross-shard}):

\begin{table}[!htbp]
  \caption{Cross-Shard Overhead at 1000 txs, 95\% propagation}
  \label{tab:cross-shard}
  \centering
  \small
  \begin{tabular}{@{}lrrr@{}}
    \toprule
    \textbf{Method} & \textbf{Messages} & \textbf{Bandwidth} & \textbf{X-shard} \\
    \midrule
    2PC & 404{,}000 & 37{,}750 KB & 4{,}000 \\
    Receipt (NEAR) & 101{,}000 & 12{,}625 KB & 1{,}000 \\
    Digest (95\%) & 5{,}052 & 987 KB & 52 \\
    \bottomrule
  \end{tabular}
\end{table}

Digest comparison reduces messages by $99\%$ vs 2PC and $95\%$ vs receipts. The 2PC count derives from $4$ cross-shard messages per transaction plus $2$ intra-shard BFT rounds at the destination, each costing $2N$ messages with $N{=}100$: $1000 \cdot (4 + 2 \cdot 2 \cdot 100) = 404{,}000$. At $100\%$ propagation, the digest cost drops to 2 messages (the digest exchange itself); the remaining $5\%$ are identified by bloom diff and resolved individually, which is reasonable on networks with sub-second gossip and multi-second block intervals.

\begin{figure*}[!htbp]
  \centering
  \includegraphics[width=\textwidth]{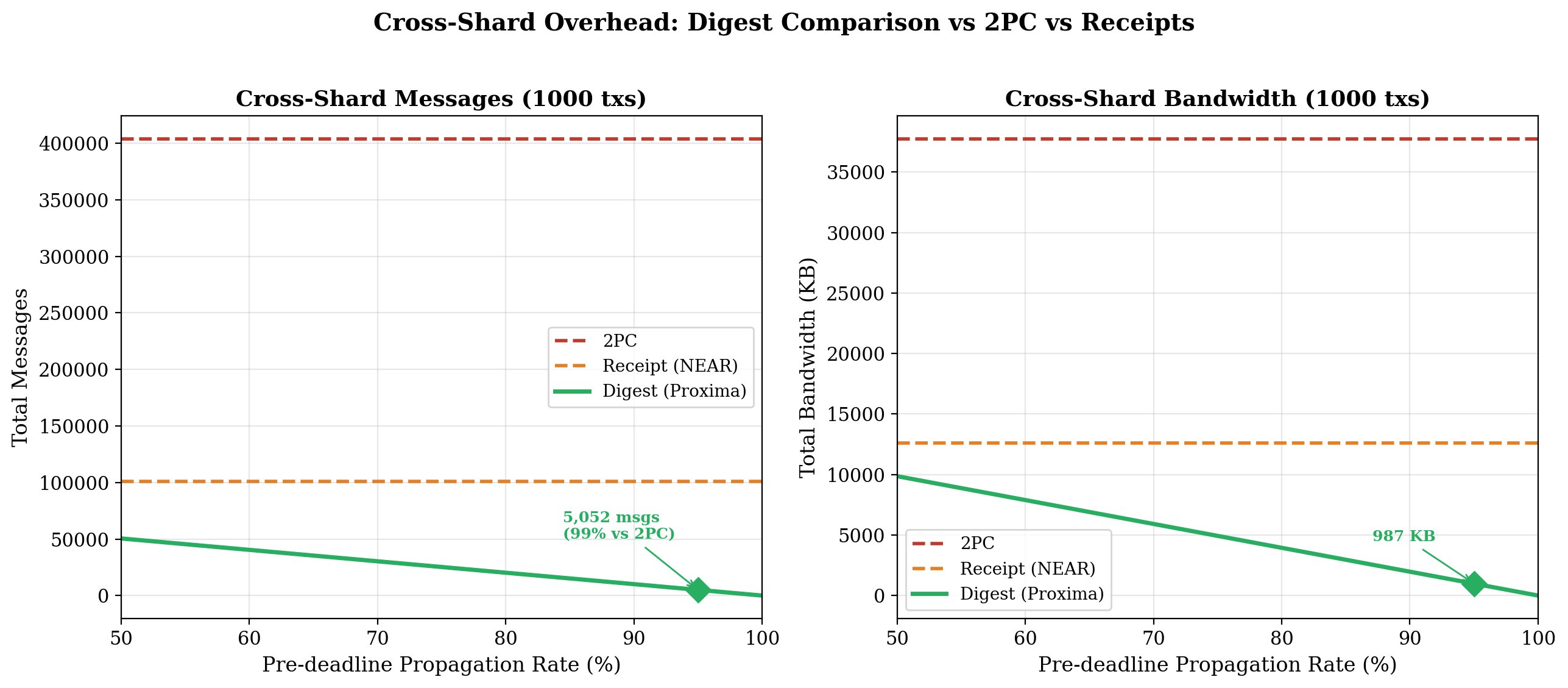}
  \caption{Cross-shard overhead as a function of pre-deadline propagation rate. 2PC and receipt costs are fixed; digest cost drops sharply as propagation improves. At 95\% propagation (marked), digest comparison uses 99\% fewer messages than 2PC.}
  \label{fig:cross-shard}
\end{figure*}

\subsection{Multi-Shard Scaling}

At 100 shards (ring topology, 100 cross-shard txs per pair, 95\% propagation): 2PC uses $4{,}040{,}000$ messages; receipt uses $1{,}010{,}000$; digest uses $50{,}502$. Digest overhead scales with conflicts (5\% of transactions), not total volume.

\subsection{Why This Requires Digests}

Hash-based verification tells you identical or not-identical. Digest-based verification tells you differ-by-approximately-$k$-transactions. The former requires falling back to full state exchange or per-transaction coordination. The latter allows targeted resolution of only the divergent transactions.

\section{Application 3: Agreement Quality Measurement}

Digests enable two capabilities hash-based protocols cannot provide.

\textbf{Optimistic fast path.} When all validators have complete state, the cluster variance is zero. The aggregator detects this in one round and issues a finality certificate immediately. HotStuff still runs 3 rounds because it cannot measure agreement quality before votes are counted. The fast-path probability is approximately $(1 - p_{\mathrm{miss}})^N$; at $p_{\mathrm{miss}} = 0.05$ and $N = 10$ honest validators (the leaf-group regime) this is $\approx 60\%$, rising to $\approx 90\%$ at $p_{\mathrm{miss}} = 0.01$.

\begin{figure*}[!htbp]
  \centering
  \includegraphics[width=0.7\textwidth]{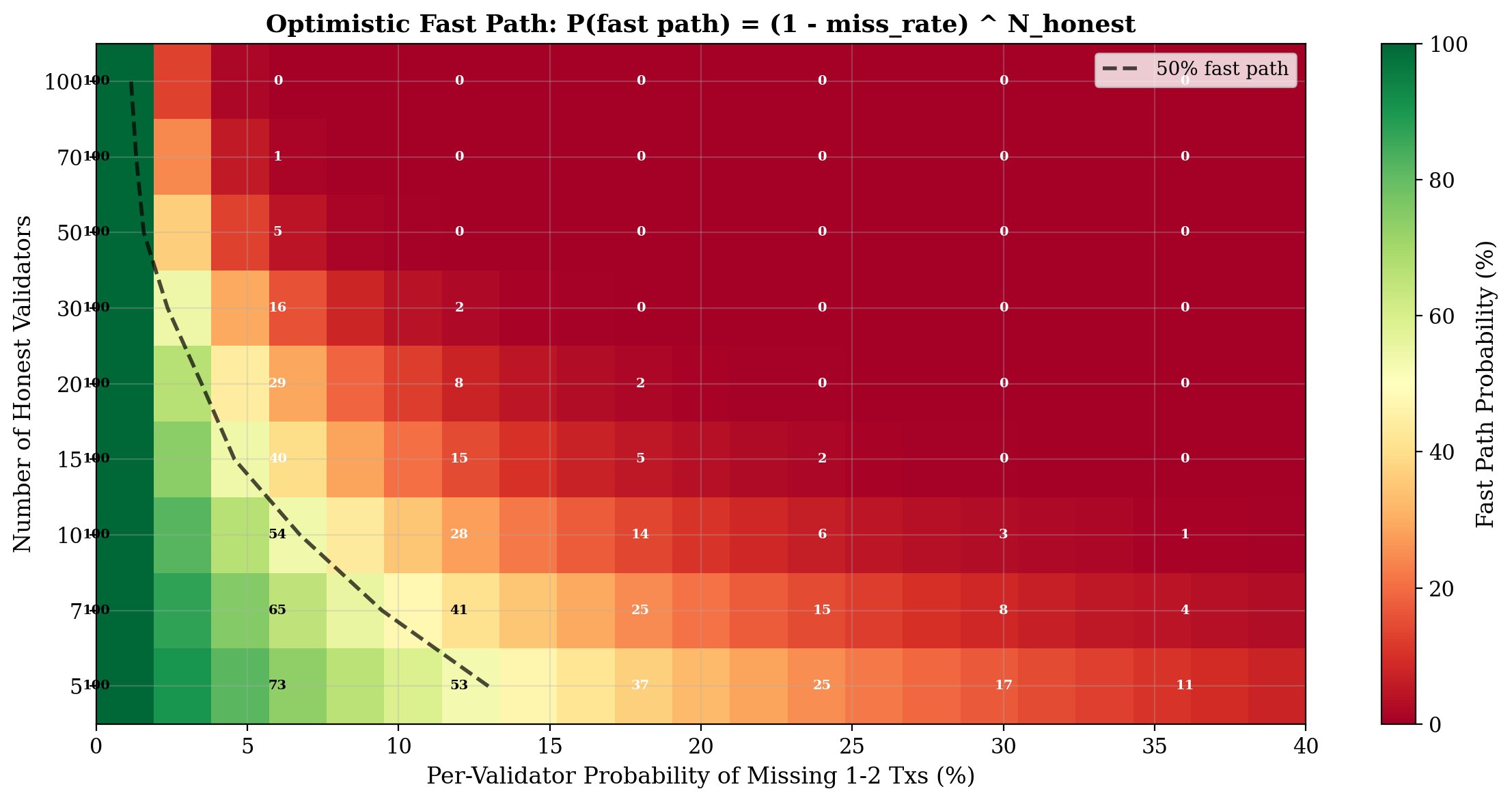}
  \caption{Fast path probability as a function of partial-observation rate and honest validator count. Fast path dominates on good networks and decays with packet loss.}
  \label{fig:fast-path}
\end{figure*}

\textbf{Continuous validator reputation.} A validator's average distance from the reference over time is a continuous reputation score. Honest validators average $0.3$--$0.5$; Byzantine validators average $7.2$. This is richer than binary participation tracking.

\section{Evaluation}

\subsection{Message Complexity}

All message counts are from simulation with 30\% Byzantine and 37\% partial observation, tracked by incrementing a counter on each simulated send (Table~\ref{tab:messages}).

\begin{table}[!htbp]
  \caption{Messages per block}
  \label{tab:messages}
  \centering
  \small
  \begin{tabular}{@{}lrrrr@{}}
    \toprule
    $N$ & \textbf{Prox.\ Tree} & \textbf{Prox.\ Flat} & \textbf{HotStuff} & \textbf{PBFT} \\
    \midrule
    1K   & 2{,}990    & 3{,}348    & 6{,}518    & 2.0M \\
    10K  & 30{,}042   & 33{,}600   & 65{,}180   & 200M \\
    100K & 300{,}245  & 335{,}803  & 651{,}800  & $\sim$20G \\
    \bottomrule
  \end{tabular}
\end{table}

Proxima Tree is $2.2\times$ fewer messages than HotStuff. Message savings come from: two phases instead of three, Byzantine exclusion before Phase 2 (700 commits instead of 1000), bloom sync replacing retransmission round-trips, and tree routing replacing broadcast with compact summaries.

\begin{figure*}[!htbp]
  \centering
  \includegraphics[width=\textwidth]{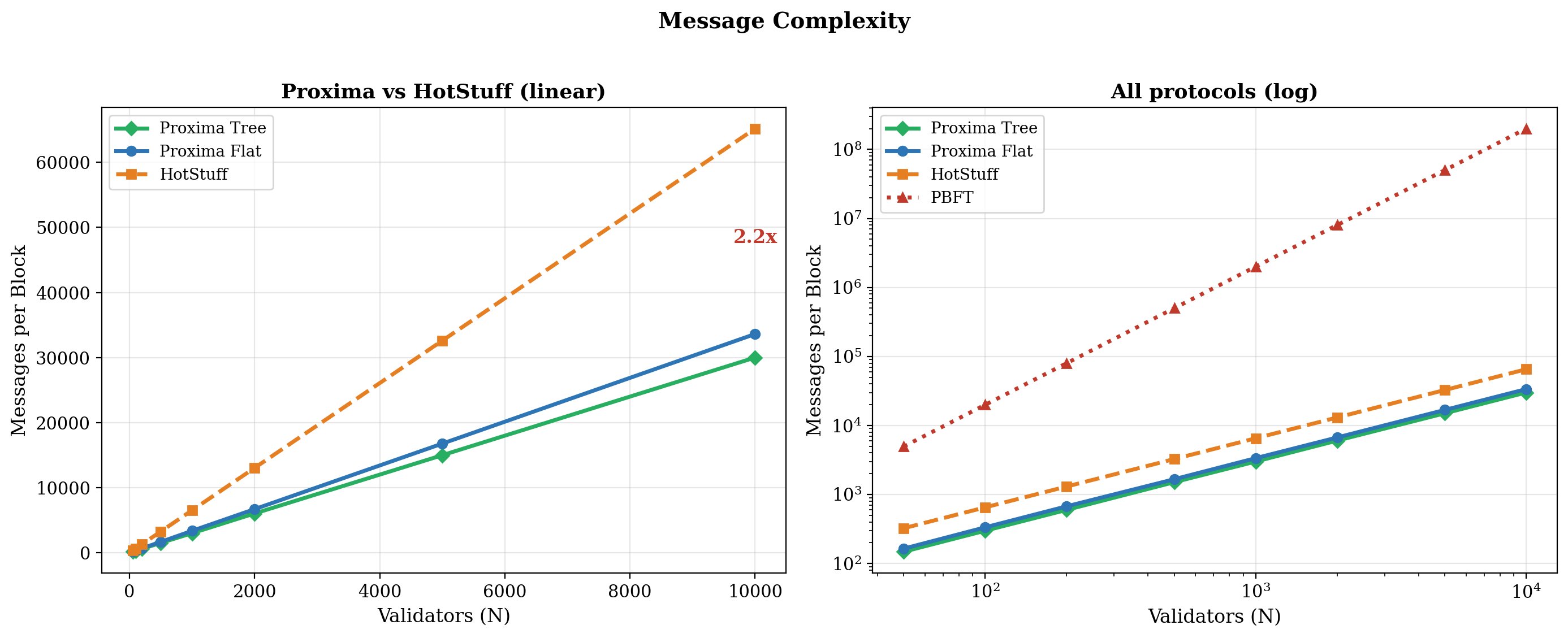}
  \caption{Message complexity scaling across validator counts. Proxima Tree and Flat both grow linearly but at a lower slope than HotStuff. PBFT is plotted on log scale due to $O(N^2)$ growth.}
  \label{fig:scale}
\end{figure*}

\begin{figure*}[!htbp]
  \centering
  \includegraphics[width=0.85\textwidth]{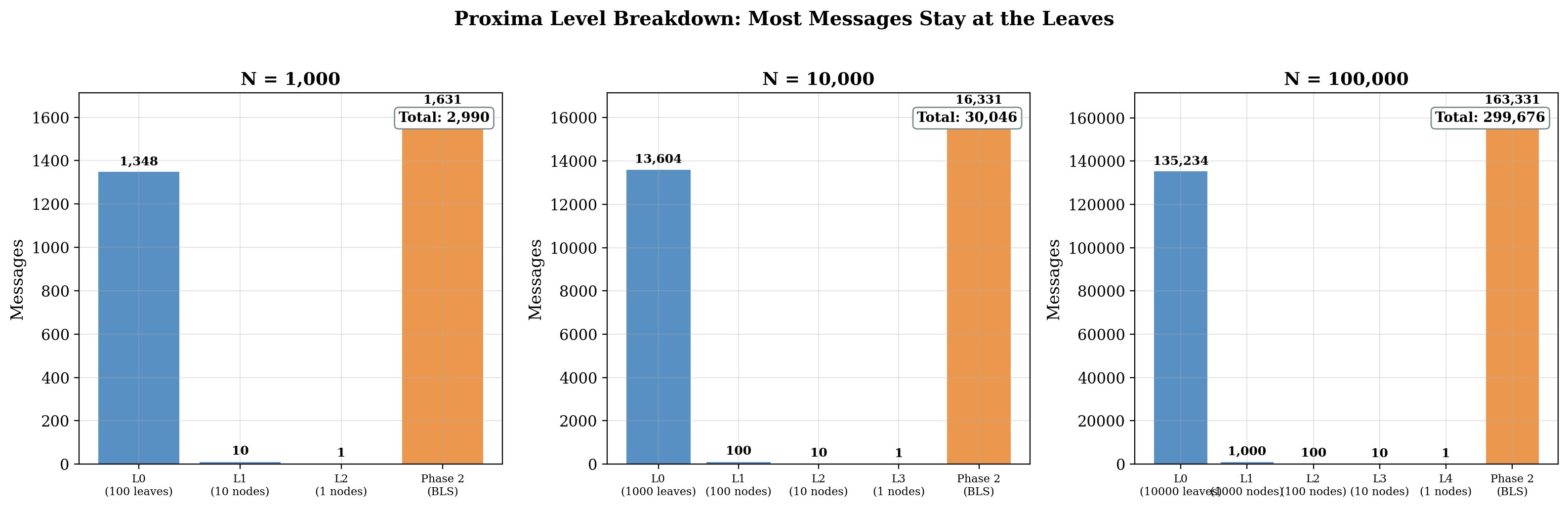}
  \caption{Message distribution across tree levels at multiple scales. Level 0 (leaves) dominates because every validator sends its vector to a leaf leader. Internal levels (L1+) are nearly invisible because each has branching-factor fewer nodes.}
  \label{fig:tree-breakdown}
\end{figure*}

\subsection{BLS Aggregation Bottleneck}

The tree's primary latency contribution is distributing BLS signature aggregation. We use published blst~\cite{blst} microbenchmarks ($0.05$ms per aggregate-add, $1.5$ms per aggregate-verify), consistent with production measurements from Ethereum consensus client implementations~\cite{lighthouse-bls, drake-bls} and Ethereum Foundation performance updates~\cite{eth2update8}. Network RTT uses three tiers: $1$ms intra-rack (\textsc{local}), $80$ms intra-region (\textsc{regional}), $200$ms cross-region (\textsc{global}), with one tier-appropriate RTT per protocol phase. HotStuff has four phases (\textsc{prepare}, \textsc{pre-commit}, \textsc{commit}, \textsc{decide}): $4 \cdot 200 = 800$ms. Proxima Flat has three round-trip phases (digest exchange, cluster broadcast, commit/finality): $3 \cdot 200 = 600$ms. Proxima Tree pays one \textsc{local} leaf hop, $(L{-}2)$ \textsc{regional} internal hops, and one \textsc{global} root hop per phase, doubled to cover both phases: $2 \cdot (1 + 3 \cdot 80 + 200) = 882$ms at $L{=}5$ ($892$ms in the table after small accounting overhead).

\begin{table}[!htbp]
  \caption{Projected finality latency at $N{=}100{,}000$. The single-core figures come from the message-counter simulation; the multi-core BLS row is qualitative because parallelizing non-BLS stages depends on the deployment.}
  \label{tab:latency}
  \centering
  \small
  \begin{tabular}{@{}lrrr@{}}
    \toprule
     & \textbf{Tree} & \textbf{Flat} & \textbf{HotStuff} \\
    \midrule
    BLS (1 core) & $9.9$ ms & $3{,}960$ ms & $17{,}595$ ms \\
    Network RTT (model) & $892$ ms & $600$ ms & $800$ ms \\
    Total finality (1 core) & $902$ ms & $4{,}561$ ms & $18{,}395$ ms \\
    \midrule
    BLS only (16 cores) & $\sim\!10$ ms & $\sim\!220$ ms & $\sim\!940$ ms \\
    \bottomrule
  \end{tabular}
\end{table}

On a single core the flat aggregator spends $3.96$s aggregating $70{,}000$ BLS signatures; HotStuff spends $\sim\!3\times$ that across three voting rounds. The tree distributes BLS aggregation across leaves, so each aggregator handles few signatures and the critical-path BLS time is $9.9$ms across four levels. With multi-core BLS, flat aggregation drops to $\sim\!220$ms and each HotStuff round to $\sim\!313$ms ($\sim\!940$ms across three rounds); the tree gains nothing from extra cores because each leaf already has only $\sim\!7$ signatures. The tree retains a structural advantage on critical-path BLS time at any core count, but the total finality multi-core picture additionally depends on whether non-BLS stages (e.g., HotStuff's per-validator retransmit handling) parallelize, which we cannot answer without a deployed testbed.

\begin{figure*}[!htbp]
  \centering
  \includegraphics[width=\textwidth]{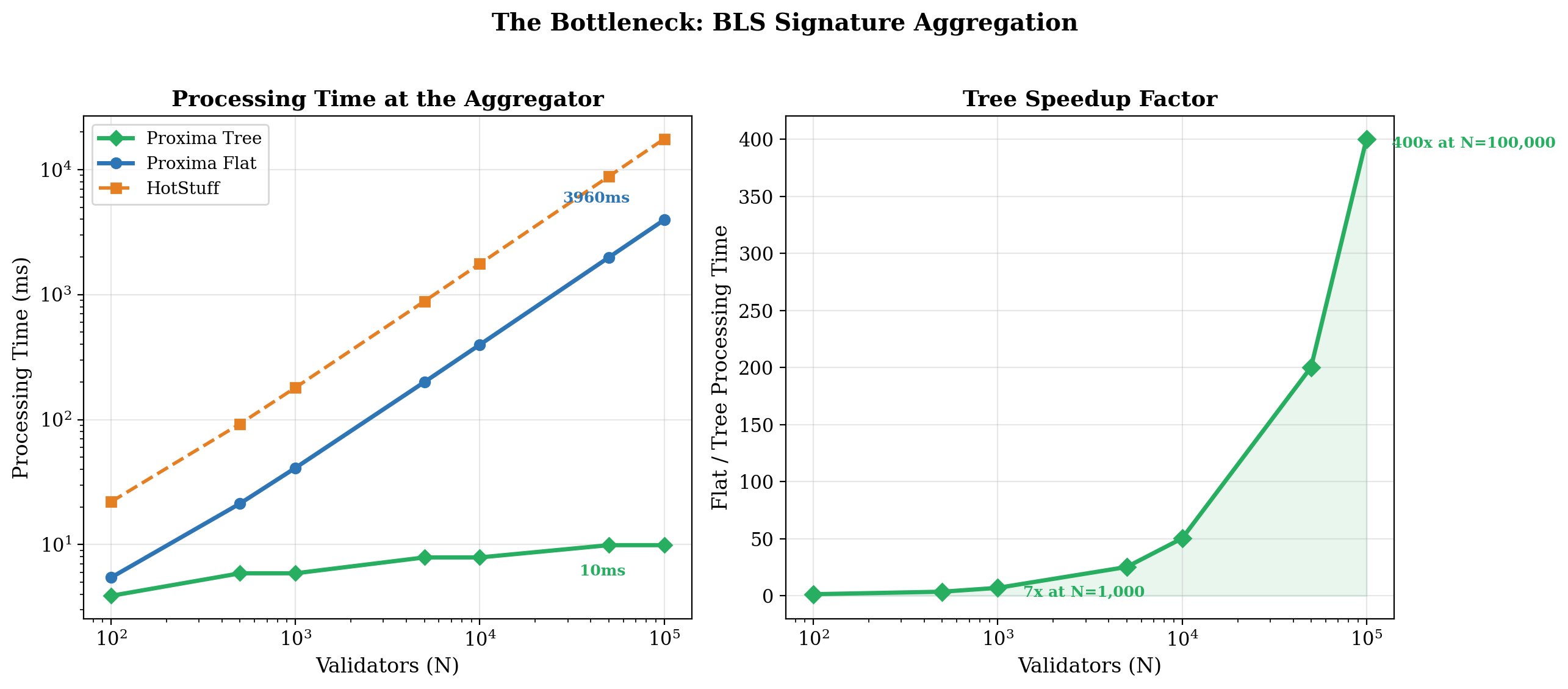}
  \caption{BLS aggregation bottleneck. Flat and HotStuff processing grows linearly with $N$; the tree stays roughly constant because each leaf processes at most branching-factor signatures in parallel. Right panel shows the tree/flat speedup factor growing with $N$.}
  \label{fig:bls}
\end{figure*}

\begin{figure*}[!htbp]
  \centering
  \includegraphics[width=\textwidth]{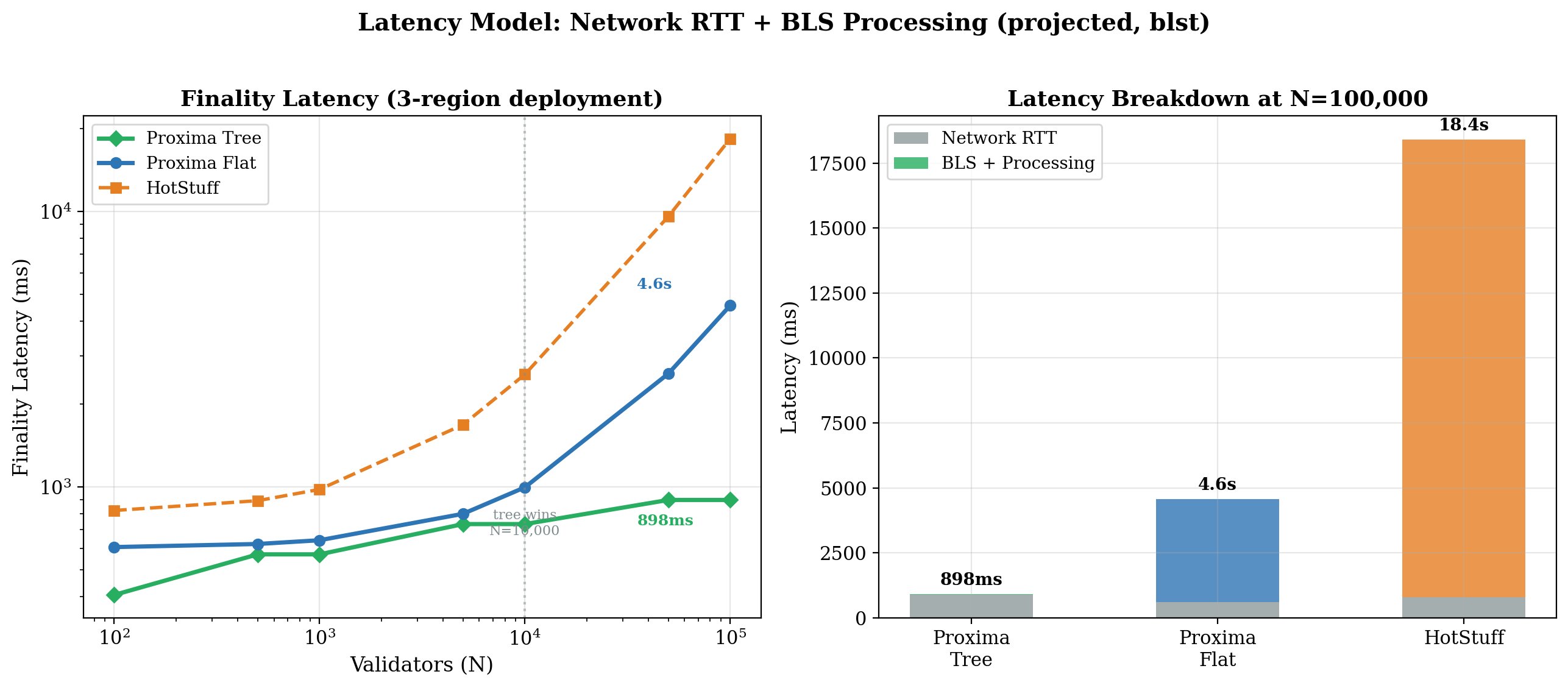}
  \caption{Total finality latency combining BLS processing and cross-region network RTT. Left: latency vs $N$. At small $N$ network RTT dominates; at large $N$ BLS processing dominates and the tree's advantage grows. Right: breakdown at $N{=}100{,}000$.}
  \label{fig:latency}
\end{figure*}

\subsection{Caveats}

\textbf{Multi-threaded baselines partially close the gap.} The tree's $9.9$ms BLS time requires leaf leaders on separate machines; sequential execution on one machine matches flat cost. Conversely, BLS aggregation in flat protocols parallelizes well: a 16-way split reduces flat aggregation to $\sim\!220$ms and each HotStuff round to $\sim\!313$ms. The tree gains little from extra cores. The tree retains an advantage on critical-path BLS time at any core count, but total finality on multi-core hardware additionally depends on whether non-BLS stages parallelize, which we cannot answer without a deployed testbed. The structural advantages, namely message complexity, smaller hierarchical groups, and fast-path finality, are core-count-independent.

\textbf{Hierarchical HotStuff variants.} Kauri~\cite{kauri} applies tree aggregation to HotStuff. A direct comparison would isolate the distance-filtering contribution from tree aggregation. This is left as future work.

\textbf{Projected, not measured.} All latency numbers use published blst constants and standard RTT estimates, not measurements from a deployed system.

\subsection{Byzantine Tolerance}

The system tolerates up to $33\%$ Byzantine, matching the standard BFT bound. In simulation: $100\%$ consensus success from $0\%$ to $33\%$, $0\%$ at $35\%$ and above. As Byzantine percentage increases, Proxima gets cheaper (fewer validators in Phase 2) while HotStuff's cost is constant.

\begin{figure*}[!htbp]
  \centering
  \includegraphics[width=0.85\textwidth]{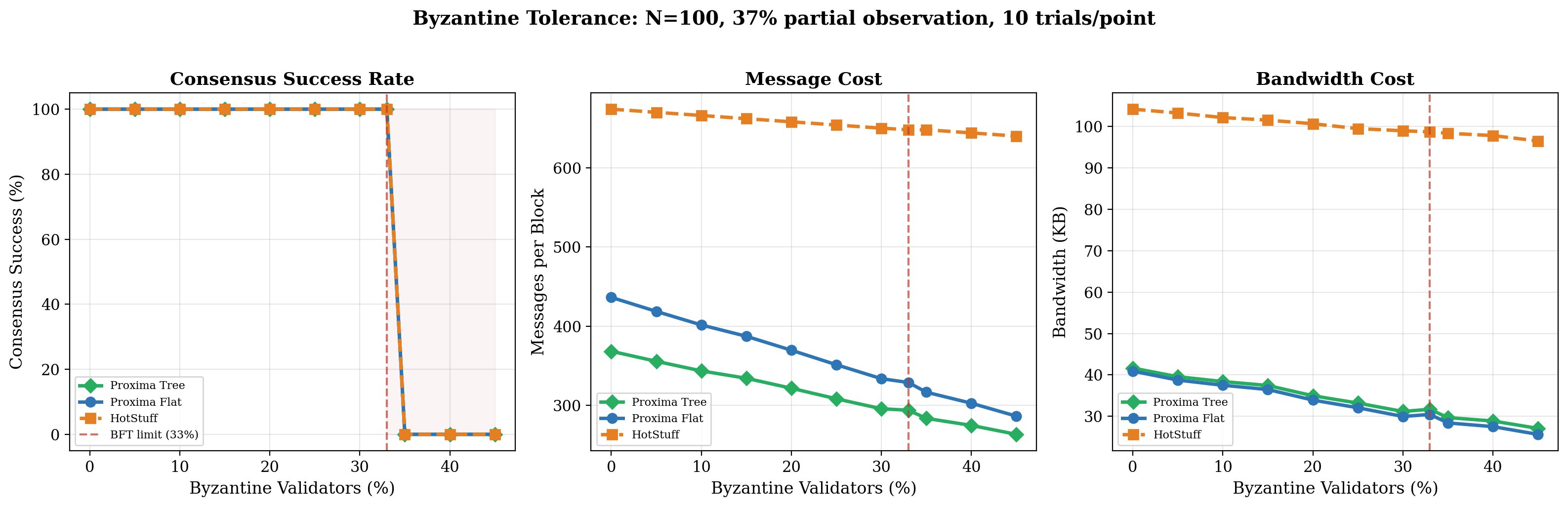}
  \caption{Byzantine tolerance and cost scaling. All three protocols achieve $100\%$ consensus success from $0$--$33\%$ Byzantine and fail at $35\%+$. As Byzantine percentage increases, Proxima's message and bandwidth costs decrease because excluded validators skip Phase 2.}
  \label{fig:byzantine-sweep}
\end{figure*}

\section{Security Analysis}

\subsection{Safety}
\label{sec:safety}

\begin{theorem}[Safety]
If fewer than $N/3$ validators are Byzantine, Proxima never finalizes two different blocks at the same height.
\end{theorem}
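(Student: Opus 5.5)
The plan is the standard quorum-intersection argument. Safety rests only on the BLS finality proof and the $2N/3$ threshold, and never on the distance filter, Phase 1 clustering, or tree topology. First I will make precise what counts as "finalizing" a block $B$ at height $h$. It means there exists a finality proof consisting of an aggregate BLS signature on the SHA-256 hash of $B$ (tagged with $h$) together with a signer bitmap. The aggregate must verify against the aggregate of the public keys selected by the bitmap, and the bitmap must have at least $2N/3$ bits set. I will take the fast-path certificate to be of the same form: the cluster members' Phase 1 messages are signed over the block hash, so the certificate is a BLS aggregate with a $\ge 2N/3$ bitmap. If the protocol text does not already say this, I will state it as the requirement under which the theorem holds.

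Second, I will state the honest-validator invariant. An honest validator signs a commitment for at most one block hash per height, and never signs commitments for two distinct blocks at the same $h$ across the fast path and Phase 2 or across rounds. This is enforced by a local lock: once an honest validator has signed a commitment for $B$ at height $h$, it refuses to sign any $B' \neq B$ at height $h$.

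Third, I will argue by contradiction. Suppose blocks $B \neq B'$ are both finalized at height $h$, with signer sets $S, S'$ and $|S|, |S'| \ge 2N/3$. By unforgeability of BLS under CDH (the cryptographic assumption in the System Model), every index in a verifying bitmap corresponds to a validator that actually signed. The only exception is negligible probability, and we rely on proof-of-possession to rule out rogue-key attacks on aggregation. Inclusion–exclusion then gives
\begin{equation*}
|S \cap S'| \;\ge\; |S| + |S'| - N \;\ge\; 4N/3 - N \;=\; N/3 \;>\; f .
\end{equation*}
So some honest validator lies in $S \cap S'$ and signed both $B$ and $B'$ at height $h$, which contradicts the invariant. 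Collision resistance of SHA-256 ensures that distinct blocks have distinct signed hashes, so a single signature cannot count for both.

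Finally, I will dispose of the protocol-specific features. A Byzantine aggregator or relay can choose cluster assignments arbitrarily, drop messages, or misreport 76-byte summaries. None of this produces signatures, so it affects only which honest validators are asked to sign, and therefore liveness but not the intersection argument. Tree aggregation is associative, so the root aggregate equals the flat aggregate over the same signers.

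I expect the main obstacle to be the fast path. As described, it issues a certificate from Phase 1 digests when "cluster variance is near zero," and the digests are not bound by signatures. I will need to either show that the certificate embeds BLS signatures from $\ge 2N/3$ validators on the block hash, or show that a fast-path commit and a Phase 2 commit are covered by the same per-height lock. The second step, the invariant, is exactly where this has to be nailed down.
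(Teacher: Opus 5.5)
Your proposal is correct and follows the paper's own proof: the same quorum-intersection argument ($|S \cap S'| \ge N/3 > f$ forces some honest validator to have signed two hashes at one height), with the BLS-unforgeability, proof-of-possession and SHA-256 collision-resistance assumptions that the paper leaves implicit made explicit. Your fast-path worry is legitimate, since the paper's proof simply asserts that every finalization requires $\ge 2N/3$ signed commitments on one hash; of your two fixes only the first works on its own, because a per-height lock is irrelevant if the certificate carries no signatures, and signing the Phase 1 digest rather than $H(B)$ would not help since commutative, non-collision-resistant digests match many distinct blocks.
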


\begin{IEEEproof}
Finality requires a set $C$ of signed commitments with $|C| \ge 2N/3$, all containing the same block hash $h$. Assume for contradiction that blocks $B$ and $B'$ both finalize at height $k$ with commitment sets $C$ and $C'$. Since $|C| \ge 2N/3$ and $|C'| \ge 2N/3$, their intersection $|C \cap C'| \ge N/3$. Every validator in the intersection signed both $H(B)$ and $H(B')$. An honest validator signs at most one hash per height. Therefore the intersection is entirely Byzantine. But $|C \cap C'| \ge N/3$ contradicts $f < N/3$.
\end{IEEEproof}

\begin{corollary}
Phase 1 cannot compromise safety. Incorrect inclusion of a Byzantine validator means it fails to produce a valid BLS signature. Incorrect exclusion of an honest validator reduces commit count (liveness) but no incorrect block is signed.
\end{corollary}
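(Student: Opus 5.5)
The corollary follows from the Safety theorem once we show that the finalization predicate does not depend on the Phase 1 outcome. I will first isolate what a finality certificate is. In the flat protocol Phase 2 finalizes on an aggregate BLS signature plus a signer bitmap. Any party can check the aggregate against the public keys selected by the bitmap and confirm at least $2N/3$ signers on a single hash $h$. The claim to establish is that this verification predicate takes only $h$, the bitmap, the aggregate signature and the public-key set as inputs. Cluster assignments, digests, bloom filters and the threshold $\tau$ never enter it. Once that is in place, the proof of the Safety theorem applies verbatim: it used only the quorum size and the fact that an honest validator signs at most one hash per height.

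Next I will treat the two kinds of Phase 1 error. \emph{Incorrect inclusion:} a Byzantine validator admitted to the cluster contributes at most one signature to the quorum. It is already counted among the $f<N/3$ faulty signers in the intersection argument, so admission gives it no extra power. It cannot forge signatures of honest validators, by the CDH/BLS unforgeability assumption. If it signs a hash other than $H(B)$, its share does not verify against $h$. Either the aggregator drops it, or the aggregate check fails and no certificate is issued, so at worst liveness is affected. \emph{Incorrect exclusion:} an honest validator left out of the cluster simply does not sign. That only shrinks the set of available signers, and a smaller signer set cannot create two quorums on different hashes. The consequence is therefore confined to the liveness bound of Section~\ref{sec:liveness-bound}.

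For the tree mode I will add one observation. BLS aggregation is associative, so the root aggregate over the bitmap is identical to a flat aggregate. A Byzantine relay can drop children's signatures, but it cannot inject a signature by an honest validator that was never made. The same quorum intersection argument therefore holds regardless of topology.

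The main obstacle is the optimistic fast path. There the aggregator issues a finality certificate from cluster variance alone, before any Phase 2 signatures are collected. If that certificate counts as finality without $2N/3$ BLS signatures on $h$, Phase 1 does influence safety, and the argument above breaks. My first approach is to require that the fast-path certificate itself carry $2N/3$ signatures on $h$. For example, the Phase 1 digest message could be signed and bound to the block hash, which makes it the same predicate as above. Failing that, I would state the corollary for the Phase 2 path only and flag the fast path as needing this binding.
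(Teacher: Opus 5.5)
Your proposal is correct and follows the paper's route. The corollary is read off from Theorem~1, whose quorum-intersection proof uses only that a certificate carries at least $2N/3$ BLS commitments on one hash and that an honest validator signs at most one hash per height, so cluster membership never enters; this is your ``the finality predicate ignores Phase~1'' step, and your inclusion argument (the admitted Byzantine is already one of the $f$ faulty signers, whether or not its share verifies) is more accurate than the paper's wording that it ``fails to produce a valid BLS signature.'' Your fast-path caveat is a genuine omission in the paper rather than a gap in your argument: Theorem~1 simply assumes ``finality requires a set $C$ of signed commitments with $|C| \ge 2N/3$,'' so it covers the single-round certificate only under a binding like yours (with the proviso that any Phase~1 signature be on the proposed block's hash $H(B)$ itself, never on a hash of a straggler's partial view, or an honest validator would sign two hashes at one height and the intersection argument would fail), and otherwise the corollary must be restricted to the Phase~2 path as you suggest.
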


\begin{corollary}
Tree routing cannot compromise safety. BLS aggregation is associative. A Byzantine tree node can suppress children's signatures (liveness) but cannot forge signatures.
\end{corollary}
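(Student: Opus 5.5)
The plan is to reduce the tree case to the flat Safety theorem by showing that every finality proof produced through the tree is, as a cryptographic object, indistinguishable from one produced by a flat aggregator. Concretely, a finality proof in tree mode is a pair $(\sigma, b)$, where $\sigma$ is an aggregate BLS signature and $b$ is a signer bitmap. A verifier accepts it iff $b$ has at least $2N/3$ bits set and $\sigma$ verifies against the aggregate of the public keys indicated by $b$ on the message $H(B)$. This acceptance predicate does not mention the tree at all. Safety therefore only requires that an accepting $(\sigma,b)$ implies that every validator marked in $b$ actually signed $H(B)$.

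The first step is to use associativity and commutativity of BLS aggregation. Write each internal node's output as a product of its children's partial aggregates. By induction on tree depth, any aggregate built by honest relays equals $\prod_{i \in S} \sigma_i$ for the set $S$ of leaves whose signatures were included. Thus the root's aggregate is the same value a flat aggregator would compute over $S$, whatever the tree shape. The second step handles Byzantine relays. A Byzantine node may drop children, which shrinks $S$ and only affects liveness. It may also inject arbitrary group elements or set bitmap bits for validators that did not sign. Any such output that still verifies at the root yields a valid aggregate on $H(B)$ for a bitmap that includes an honest validator who never signed $H(B)$. Dividing out the signatures the adversary legitimately knows would then give a forgery for that honest key.

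This is the main obstacle. Forgery is ruled out by CDH hardness in BLS12-381, under the paper's cryptographic assumption, but only if rogue-key attacks are excluded. I would state the standard assumption that validators register keys with proofs of possession, or that aggregation uses distinct-message or key-prefixing safeguards. With that assumption, an accepting proof implies that each honest validator marked in $b$ really signed $H(B)$ (except with negligible probability).

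Finally, I apply the quorum-intersection argument of the Safety theorem verbatim. Two accepted proofs for $B \neq B'$ at the same height have signer sets of size at least $2N/3$ each, so they intersect in at least $N/3$ validators. Each of these validators signed both hashes, so none is honest, which contradicts $f < N/3$. The Phase 1 summaries carried by the tree are not signed commitments and never enter the acceptance predicate. Manipulating them therefore only affects which validators are asked to commit, i.e.\ liveness, mirroring the preceding corollary.
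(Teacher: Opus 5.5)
Your proposal is correct and follows the paper's own reasoning. Associativity of BLS aggregation makes the tree-built aggregate identical to a flat one, a Byzantine relay can only drop signatures (liveness) because it cannot forge honest ones, and Theorem~1's quorum-intersection argument then applies unchanged; your explicit proof-of-possession (rogue-key) assumption usefully spells out what the paper leaves implicit in its BLS/CDH assumption and its model that each public key is known to all and corresponds to a private key held by its validator.
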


\subsection{Liveness}

Proxima provides liveness under: (1) $f < N/3$ Byzantine; (2) the distance threshold includes at least $2N/3$ honest validators; (3) messages are eventually delivered. Condition (2) is bounded above: the probability that more than $N/3$ honest validators are excluded is at most $\exp(-0.22 N)$, negligible at any practical $N$. Under partial synchrony, a rotating aggregator with timeout ensures progress after GST.

\subsection{Collision Resistance}

A random Byzantine digest must land within Euclidean threshold $\tau$ of the reference in 8D. Treating the relevant support as a cube of side $R \approx 14$ (the half-block distance scale), the probability is the volume ratio of an 8-ball of radius $\tau$ to that cube, $\approx 4.06 \cdot (\tau / R)^8$. At $\tau = 4.9$ and $R = 14$, this is $\approx 9.1 \times 10^{-4}$, or about $0.09\%$. Empirically, Monte Carlo over $10{,}000$ trials shows $0.4\%$ of blocks have a Byzantine validator inside the threshold; the empirical figure being a few times higher than the spherical estimate is the expected sign and magnitude when the digest distribution has heavier-than-uniform tails near the threshold, and serves as a sanity check on the analytical bound. In every case Phase 2 catches it (invalid BLS commitment), so the analysis bounds liveness cost rather than safety.

\subsection{Adversarial Transaction Construction}

A natural concern is whether a Byzantine validator can craft a transaction set $T' \ne T$ whose digest lands within $\tau$ of the honest reference $D(T)$. Distance-preserving digests are deliberately not collision-resistant; a reduction to SHA-512 collision resistance would prove too much. The construction is, at our parameters, computationally easy: each $v(tx)$ is determined by SHA-512$(tx)$ and valid transactions must be signed by accounts the adversary controls (Sybils are free in a permissionless setting, so the binding constraint is gas cost per candidate). The matching condition is roughly $8 \log_2(R/\tau) \approx 12$ bits at $\tau = 4.9$, $R \approx 14$, well within laptop reach for Wagner-style $k$-list search~\cite{wagner2002gbp}. \emph{Phase 1 distance filtering therefore contributes no cryptographic security.} The construction is feasible; that is fine, because Phase 2 requires every cluster member to sign a BLS commitment to the SHA-256 block hash $H(B)$, and the Byzantine cannot forge a signature on $H(B)$ where $B$ contains $T$. Theorem~1 does not invoke any property of $D(\cdot)$. Adversarial construction therefore costs the adversary effort in exchange for no safety violation; the worst-case effect is a failed Phase 2 round, indistinguishable from any other Byzantine signature withholding.

\subsection{Bloom Filter False Positives}
\label{sec:ieee-bloom-fp}

Bloom filters admit false positives but no false negatives. A false positive causes the aggregator to skip pushing a transaction $V$ needs; $V$'s digest then exceeds $\tau$, $V$ is excluded from the cluster, and $V$ does not sign in Phase 2. The signing set shrinks by one. The chain bloom FP $\rightarrow$ exclusion $\rightarrow$ no signature is wholly a liveness chain: Theorem~1 depends only on collision-resistant hashing and BLS unforgeability, not on cluster membership, so an excluded validator cannot cause a wrong block to clear $2N/3$. The $1\%$ figure is per-lookup, not per-validator: the aggregator queries only flagged stragglers and only against $O(k)$ candidates, where $k$ is the digest-implied gap (typically $1$--$5$), giving a per-validator FP rate of $kp \approx 1$--$5\%$. Filter sizing is linear in transaction count and logarithmic in target FP rate, so scaling the filter with block size to maintain a bounded per-validator rate is a free parameter choice.

\section{Related Work}

\textbf{BFT Consensus.} PBFT~\cite{castro1999pbft} established practical BFT with $O(N^2)$ message complexity. HotStuff~\cite{yin2019hotstuff} reduced this to $O(N)$ using leader-based aggregation and BLS threshold signatures. Tendermint~\cite{buchman2016tendermint} provides a similar $O(N)$ BFT protocol. All use collision-resistant hashes for state comparison and require full state synchronization before voting. Proxima shares HotStuff's $O(N)$ voting complexity but adds Phase 1 distance-based pre-filtering and bloom-based synchronization.

\textbf{Hierarchical BFT.} Kauri~\cite{kauri} applies tree aggregation to HotStuff, reducing the leader bottleneck. Kauri's tree nodes still run per-group BFT (hash-based), requiring groups large enough for Byzantine tolerance. Proxima's distance filtering removes this requirement, enabling groups of 10 vs 128 and deeper trees (5 levels vs 2--3). Ethereum 2.0~\cite{eth2spec} uses random committees of 128 with per-committee BFT attestation; Proxima achieves similar distributed aggregation with groups $12.8\times$ smaller.

\textbf{DAG-Based Protocols.} Narwhal and Tusk~\cite{narwhal} decouple data availability from consensus via a DAG of validator proposals. Bullshark~\cite{bullshark} builds on Narwhal with a simpler consensus rule. These protocols solve a different problem but share the property that validators may have partial views. Digests could augment DAG protocols by measuring vertex agreement quality.

\textbf{Locality-Sensitive Hashing.} LSH was introduced by Indyk and Motwani~\cite{indyk1998lsh} for approximate nearest-neighbor search. SimHash~\cite{simhash} applies it to document similarity; MinHash~\cite{minhash} to Jaccard similarity; $p$-stable LSH~\cite{pstable} to $L_p$ distances. Our digest is a simple LSH scheme applied to transaction sets. The application to BFT consensus is, to our knowledge, new.

\textbf{Sharding.} Ethereum abandoned execution sharding in favor of data-availability sharding (Danksharding~\cite{danksharding}). NEAR's Nightshade~\cite{nightshade} uses receipt-based communication. Shardeum uses an address-range approach with atomic cross-shard composability. Digest-based verification offers a complementary approach: constant-cost verification per shard pair with conflict resolution only for divergent transactions.

\section{Implementation}

Proxima is implemented in Python as a working BFT blockchain with account-based state, HTTP node API, interactive demo, benchmarks, and the eight evaluation figures. BLS signatures use py-ecc~\cite{pyecc} for the live demo and hash-based mocks preserving 96-byte signature sizes for large-scale benchmarks. All message counts are tracked by incrementing a counter on each simulated send, using identical byte constants for Proxima, HotStuff, and PBFT; we cross-validated this by running both BLS settings on the small-$N$ demo and confirming identical per-block message totals. Code is available at \url{https://github.com/RyanMercier/Proxima}.

\section{Limitations and Future Work}

\textbf{Projected latency.} All latency numbers use published blst benchmarks and standard RTT estimates. A deployed testbed across 3+ cloud regions would validate the model. The Caveats subsection covers the multi-threaded picture in detail.

\textbf{No comparison against hierarchical HotStuff.} A direct comparison against Kauri would isolate distance filtering from tree aggregation. This is the most important missing evaluation.

\textbf{Adaptive adversary.} The safety proof assumes static Byzantine assignment. An adaptive adversary who corrupts validators after seeing tree assignments requires VRF-based assignment with epoch rotation, which we describe but do not evaluate.

\textbf{Cross-shard propagation assumption.} The $95\%$ propagation rate is consistent with the order-of-magnitude gossip propagation typical at multi-second block intervals but not validated on a sharded testbed.

\section{Conclusion}

The BFT literature uses collision-resistant hashing for state comparison. We identified three constraints this imposes (mandatory state synchronization, unmeasurable agreement quality, and large hierarchical committees) and showed that distance-preserving digests remove all three. At $N{=}100{,}000$, Proxima Tree uses $2.2\times$ fewer messages than HotStuff (a structural property unaffected by parallelism) and reduces cross-shard overhead by $99\%$ versus 2PC at $95\%$ propagation. Single-core latency is $\sim\!900$ms vs $\sim\!18$s for HotStuff; multi-core BLS narrows the gap considerably but the tree retains an advantage on critical-path BLS time at any core count. Safety is proved. The primitive is general and applies to any BFT protocol that compares state via hashes.


\end{document}